\documentclass[aps,prl,letterpaper,superscriptaddress,twocolumn]{revtex4}

\usepackage{amstext,amsmath,amssymb,amsthm}
\usepackage{color}

\usepackage{times}


\usepackage{graphicx}
\usepackage[colorlinks]{hyperref}

\theoremstyle{plain}

\newtheorem{theorem}{Theorem}

\theoremstyle{definition}
\newtheorem{definition}{Definition}

\newcommand{\ket} [1] {\vert #1 \rangle}

\newcommand{\ba}{\begin{align}}
\newcommand{\ea}{\end{align}}
\newcommand{\bea}{\begin{eqnarray}}
\newcommand{\eea}{\end{eqnarray}}

\definecolor{ao}{rgb}{0.0, 0.5, 0.0}


\setlength{\parskip}{1pt}


 \definecolor{CYAN}{cmyk}{1,0,0,0}
 \definecolor{MAGENTA}{cmyk}{0,1,0,0}
 \definecolor{YELLOW}{cmyk}{0,0,1,0}



\def\id{I}

\def\1{\mat{\id}}
\def\mat#1{\mathbf{#1}}




\def\EQ#1{\begin{eqnarray}#1\end{eqnarray}}


\begin{document} 

\title{Smooth input preparation for quantum and quantum-inspired machine learning}

\author{Zhikuan Zhao}
\email{zhikuan.zhao@inf.ethz.ch}
\affiliation{Department of Computer Science, ETH Zurich,  Universitatstrasse 6, 8092 Zurich}
\affiliation{Singapore University of Technology and Design, 8 Somapah Road, Singapore 487372}
\affiliation{Centre for Quantum Technologies, National University of Singapore, 3 Science Drive 2, Singapore 117543}
\author{Jack K. Fitzsimons} 
\affiliation{Department of Engineering Science, University of Oxford, Oxford OX1 3PJ, UK}
\author{Patrick Rebentrost}
\affiliation{Centre for Quantum Technologies, National University of Singapore, 3 Science Drive 2, Singapore 117543}
\author{Vedran Dunjko}
\affiliation{Max-Planck-Institut f\"{u}r Quantenoptik, Hans-Kopfermann-Str. 1, 85748 Garching, Germany}
\affiliation{LIACS, Leiden University, Niels Bohrweg 1, 2333 CA Leiden, The Netherlands}
\author{Joseph F. Fitzsimons}
\email{joseph_fitzsimons@sutd.edu.sg}
\affiliation{Singapore University of Technology and Design, 8 Somapah Road, Singapore 487372}
\affiliation{Centre for Quantum Technologies, National University of Singapore, 3 Science Drive 2, Singapore 117543}
\affiliation{Horizon Quantum Computing, 79 Ayer Rajah Crescent, Singapore 139955}
\begin{abstract}
Machine learning has recently emerged as a fruitful area for finding potential quantum computational advantage.  
Many of the quantum enhanced machine learning algorithms critically hinge upon the ability to efficiently produce states proportional to high-dimensional data points stored in a quantum accessible memory. 
Even given query access to exponentially many entries stored in a database, the construction of which is considered a one-off overhead, it has been argued that the cost of preparing such amplitude-encoded states may offset any exponential quantum advantage.
Here we prove using smoothed analysis, that if the data-analysis algorithm is robust against small entry-wise input perturbation, state preparation can always be achieved with constant queries. This criterion is typically satisfied in realistic machine learning applications, where input data is subjective to moderate noise. Our results are equally applicable to the recent seminal progress in quantum-inspired algorithms, where specially constructed databases suffice for polylogarithmic classical algorithm in low-rank cases. The consequence of our finding is that for the purpose of practical machine learning, polylogarithmic processing time is possible under a general and flexible input model with quantum algorithms or quantum-inspired classical algorithms in the low-rank cases. \end{abstract}

\maketitle
\date{today}
\maketitle
In recent years, there has been substantial interest in algorithms based on ``quantum linear algebra'', where quantum states are used to represent vectors with exponentially large dimensions, which are manipulated by large matrices representing quantum operations \cite{zhao2019compiling, QSVT}. A particularly fruitful domain of applying such methods is quantum machine learning, where such algorithms promise for exponential and high-degree polynomial improvements \cite{aimeur2006machine,superunsuper,QVSM,QPC,PhysRevA.99.052331, 10.1088/1361-6633/aab406,zhao2019quantum,Zhao2019}. Most of these methods rely on quantum algorithms for solving linear systems that trace back to the seminal result of Harrow, Hassidim and Lloyd \cite{HHL}. This algorithm provides a way to generate quantum states proportional to $A^{-1} \mathbf{x}$, in time logarithmic in the dimension of the vector $\mathbf{x}$ which was subsequently improved to achieve exponentially better precision \cite{childs2017quantum}. However, as Aaronson pointed out \cite{Aaronson}, these quantum assisted machine learning approach comes with several caveats relating to the sparsity and conditioning of the matrix involved, and the structure of the input vector. Crucially, nearly all the algorithms assume access to a quantum accessible database which can produce quantum states with amplitudes proportional to the classical input entries, which is referred to as amplitude encoding.
This requirement persists even under the reasonable assumption that the loading and construction of the required databases constitute a one-off overhead, and does not contribute to the computational complexity analysis of data processing. 

A breakthrough came with the seminal work by Kerenidis and Prakash who provided the first end-to-end quantum machine learning algorithm with explicit description of quantum accessible data structure for efficient input preparation, which was applied to recommendation systems \cite{Kerenidis2016}. The same data structure was subsequently applied to improve the performance of the quantum linear system algorithm in dense cases \cite{wossnig2018quantum}. The Kerenidis-Prakash (KP) model explicitly stores the amplitudes in a binary tree structure. As such in addition to the ability of realising amplitude encoding for the row vectors of a matrix, the KP structure further allows for a stronger ability for preparing classical probability distributions proportional to vectors of entries squared -- so-called $\ell_{2}$-sampling. 
From a classical perspective, intriguingly, recent results of quantum-inspired machine learning algorithms \cite{tang2018quantum, gilyen2018quantum, TangPCA, chia2018quantum} have shown that assuming such a data structure capable of efficient $\ell_{2}$-sampling leads to equally efficient classical algorithms in the low-rank cases. The KP structure allows for input preparation both in the quantum and classical cases with a logarithmic storage overhead. The entries in the KP structure, i.e. the partial sums, are fixed beforehand and stored in the data structure. As a result, specific vectors for which the correct partial sums are stored can be directly accessed as amplitude encoded quantum state without post-selection. However, practical situations may arise where computation and re-computation of the partial sums is inefficient. For instance, it could be desirable to generate input vectors collecting entries from differing rows of a matrix stored in the data structure in statistical resampling techniques for cross-validation, or specific entry-wise functions are required for different algorithms using the same input data. In such situations, a more flexible and general input query model would have significant relevance.

In this letter, we exam the generic input model where only entry-wise access is allowed, and prove using smoothed analysis that both quantum amplitude encoding and classical $\ell_{2}$-sampling can be achieved with constant queries in realistic machine learning application with moderately noisy input data. In the quantum setting, entry-wise access is equivalent to having access to an oracle $O_{\mathbf{x}}$ that implements the unitary transformation,
\EQ{
\sum_{i,j} \alpha_{ij} \ket{i}\ket{j} \xrightarrow{O_{\mathbf{x}}} \sum_{i,j} \alpha_{ij} \ket{i}\ket{j + x_i}.\label{eqOracle}}
whereas in the classical setting, it is the standard random access memory that provides the mapping, $i \rightarrow x_i$. 
Generating the exact amplitude encoded states (or $\ell_{2}$-samples) from the entry-wise oracle, by the Grover (or unstructured) search lower bound requires $\Omega(D^{1/2})$ ($\Omega(D)$, respectively) calls. This observation is a common critique on early quantum machine proposals which did not explicitly address the issue of initial state preparation, yet attempted argue for polylogarithmic runtimes. Here we show that, in fact, for practical machine learning where the algorithm is robust against moderate input noise represented by entry-wise perturbations, quantum state preparation based on amplitude encoding has high success probability with only constant query cost. Furthermore, By an analogous argument, we also show that low-rank quantum-inspired algorithms operating under the same practical assumptions do not require access to special data structures to attain polylogarithmic runtime.

For quantum amplitude encoding, our objective is to prepare the state
\begin{align}
	\ket{\mathbf{x}} = \|\mathbf{x}\|_2^{-1}\sum_{i=1}^D x_i \ket{i}	
\end{align}
with amplitudes proportional to the entries $x_i$ of $\mathbf{x} \in \mathbb{R}^D$, given access to an oracle that realises the mapping in Eq. \ref{eqOracle}.
Quantum random access memory (QRAM) provides one way to implementing such an operation. In this case, $\mathbf{x}$ is stored classically and one is able to access the corresponding memory cells in quantum superposition \cite{RAM}. In other cases, the oracle can also be constructed if $x_i$ is efficiently computable given the index $i$.
	To produce $\ket{\mathbf{x}}$ probabilistically for any $\mathbf{x}$, one can start with the state $D^{-\frac{1}{2}}\sum_i \ket{i}\ket{0}$ as the query state for the operation in Eq.~(\ref{eqOracle}) to obtain $D^{-\frac{1}{2}} \sum_i \ket{i}\ket{x_i}$. An ancillary qubits is prepared in state $\ket{0}$ and then conditionally rotated based on the second register to obtain
 \begin{align}
 	 D^{-\frac{1}{2}} \sum_i \ket{i}\ket{x_i}\left(\sqrt{1-|x_i|^2} \ket{0} + x_i \ket{1}\right), \label{eq:rot}
 \end{align} 
 where we have assumed for simplicity that $\mathbf{x}$ is normalised such that $|x_i|\leq 1$. Performing a second oracle call to uncompute the registers $\ket{x_i}$, and post-selecting onto $\ket{1}$ results in the desired state $\ket{\mathbf{x}}$.

A known barrier to state preparation in the oracle setting is the probability of projecting onto the correct subspace in the final step, which is given by $D^{-1} \sum_i |x_i|^2$. When the entries of $\mathbf{x}$ are of similar magnitude, $\ket{\mathbf{x}}$ can be prepared using a constant number queries. However, in the case where a few entries are much larger than the rest, the lower bounds on unordered search \cite{boyer1996tight} imply that the corresponding state requires $\Omega\left(\sqrt{D}\right)$ oracle queries \cite{soklakov2006efficient}. This argument can be extended to the case where it is only necessary to prepare any $\ket{\mathbf{x'}}$ such that $|\mathbf{x'} - \mathbf{x}|_2$ is sufficiently small. 

However, here we make the observation that if real-valued data comes from a realistic source subjective to noise, and the algorithm is robust against such realistic input noise, then quantum state preparation can be done efficiently with constant oracle queries.
Formally, our argument is based on analysing the smoothed complexity for the amplitude encoding procedure when the input vectors are subjective to small perturbations. The smoothed complexity was introduced by Spielman and Teng \cite{spielman2009smoothed}, originally to explain the efficient performance of the simplex algorithm for linear programming in typical real-world scenarios. The same line of reasoning was subsequently applied to analyse the practical efficiency of various important algorithms in mathematical programming, machine learning, numerical analysis  discrete mathematics and combinatorics optimisation \cite{Spielman2001Smoothed}. 

The key intuition here is to analyse the performance of the algorithms when the worst-case data input is subjective to noise, which is represented by a small Gaussian element-wise perturbation. Following the convention of Ref. \cite{spielman2009smoothed}, we state the definition of smoothed complexity and then prove that preparing amplitude encoded states has a constant smoothed complexity:
\begin{definition}[Smoothed Complexity \cite{spielman2009smoothed}]
Given an algorithm $\mathcal{A}$ with an input domain $\Omega_D=\mathbb{R}^D$, the smoothed complexity of $\mathcal{A}$ with $\sigma$-Gaussian perturbation is defined as 
\begin{align}
	\text{Smoothed}^{\sigma}_{\mathcal{A}}(D)=\max_{\mathbf{x}\in [-1,1]^D} \mathbb{E}_{\mathbf{g}}[T_\mathcal{A}(\mathbf{x}+\mathbf{g})], \label{smooth}
\end{align}
where $\mathbf{g}$ is a Gaussian random vector with variance $\sigma^2$, and $T_\mathcal{A}$ denotes the runtime of $\mathcal{A}$.
\end{definition}
Furthermore, $\mathcal{A}$ is said to have polynomial smoothed complexity if there exist positive constants $k_1$, $k_2$, $D_0$, $\sigma_0$ and $c$ such that for all $D\ge D_0$ and $0\le \sigma \le \sigma_0$, it holds that 
\begin{align}
\text{Smoothed}^{\sigma}_{\mathcal{A}}(D)\le c \sigma^{-k_2} D^{k_1}.	
\end{align}
\begin{theorem}
	\label{AEsmoothed}
	Given oracle access, $O_\mathbf{x}$ to the entries of $\mathbf{x}\in \mathbb{R}^D$, the amplitude encoding of $\mathbf{x}$ into $\ket{\mathbf{x}}$ has smoothed complexity $\mathcal{O}(1/\sigma)$.
\end{theorem}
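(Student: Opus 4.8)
The plan is to reduce the smoothed complexity to a negative moment of a (noncentral) chi-squared variable. First I would fix the cost model: boosting the post-selection step of Eq.~(\ref{eq:rot}) by amplitude amplification prepares $\ket{\mathbf{x}}$ using $O(1/\sqrt{p})$ oracle calls, where $p = D^{-1}\sum_i |x_i|^2 = \norm{\mathbf{x}}_2^2/D$ is exactly the success probability quoted above (this is what upgrades the naive $O(1/p)$ and ultimately yields $1/\sigma$ rather than $1/\sigma^2$). Hence on a perturbed input $\mathbf{y}=\mathbf{x}+\mathbf{g}$ the runtime is $T_\mathcal{A}(\mathbf{y}) = O\!\left(\sqrt{D}/\norm{\mathbf{y}}_2\right)$, and by Eq.~(\ref{smooth}) the quantity to bound is
\begin{align}
\text{Smoothed}^{\sigma}_{\mathcal{A}}(D) = O(\sqrt D)\,\max_{\mathbf{x}\in[-1,1]^D}\mathbb{E}_{\mathbf{g}}\!\left[\frac{1}{\norm{\mathbf{x}+\mathbf{g}}_2}\right].
\end{align}

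The immediate temptation is to combine $\mathbb{E}[\norm{\mathbf{y}}_2^2]=\norm{\mathbf{x}}_2^2+D\sigma^2\ge D\sigma^2$ with concentration to claim $\norm{\mathbf{y}}_2\approx\sqrt D\,\sigma$. This is the main obstacle: because $t\mapsto t^{-1/2}$ is convex, Jensen runs the wrong way, so a mean/concentration bound only produces a \emph{lower} bound on the expectation. I would therefore control the negative moment honestly. Since $\mathbf{g}$ is isotropic Gaussian with entrywise variance $\sigma^2$, the rescaled norm $\norm{\mathbf{y}}_2^2/\sigma^2$ follows a noncentral chi-squared law with $D$ degrees of freedom and noncentrality $\lambda=\norm{\mathbf{x}}_2^2/\sigma^2$, so the target reduces to bounding $\mathbb{E}[(\chi_D^2(\lambda))^{-1/2}]$ uniformly over $\lambda\ge 0$.

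To pin down the worst-case input I would use the Poisson-mixture representation $\chi^2_D(\lambda)\stackrel{d}{=}\chi^2_{D+2J}$ with $J\sim\mathrm{Poisson}(\lambda/2)$. Setting $m(k)=\mathbb{E}[(\chi_k^2)^{-1/2}]$, the coupling $\chi^2_{k+1}=\chi^2_k+g^2\ge\chi^2_k$ shows $m(k)$ is nonincreasing in $k$, while $J$ is stochastically increasing in $\lambda$; hence $\mathbb{E}[(\chi_D^2(\lambda))^{-1/2}]=\mathbb{E}_J[m(D+2J)]$ is nonincreasing in $\lambda$ and is maximised at $\lambda=0$, i.e.\ at $\mathbf{x}=\mathbf 0$. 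This eliminates the maximisation over $\mathbf{x}$ and leaves only the central case.

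Finally I would evaluate the central negative moment in closed form,
\begin{align}
m(D)=2^{-1/2}\,\Gamma\!\big(\tfrac{D-1}{2}\big)\big/\Gamma\!\big(\tfrac D2\big)=\Theta\!\big(D^{-1/2}\big),
\end{align}
by Gautschi's inequality or the Gamma-ratio asymptotic $\Gamma(a)/\Gamma(a+\tfrac12)\sim a^{-1/2}$; note $m(D)$ is finite precisely because $D\ge 2$, so no separate lower-tail estimate is required. Combining with the display above gives $\text{Smoothed}^{\sigma}_{\mathcal{A}}(D) = O(\sqrt D)\cdot\sigma^{-1} m(D) = \mathcal{O}(1/\sigma)$, as claimed. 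The crux is thus the convexity obstruction noted in the second step: the whole argument rests on replacing a mean/concentration bound with an exact negative-moment computation, together with the monotonicity in $\lambda$ that fixes the worst case at the origin. (A minor technical point to address is that the perturbation may push $|y_i|$ past the normalisation $|x_i|\le1$ assumed in Eq.~(\ref{eq:rot}); for small $\sigma\le\sigma_0$ this is controlled by a harmless rescaling that does not affect the scaling in $\sigma$.)
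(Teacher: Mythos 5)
Your argument is correct, and it reaches the same headline bound by the same overall route---fixed-point amplitude amplification giving a runtime of order $\sqrt{D}/\norm{\mathbf{x}+\mathbf{g}}_2$, followed by an average over the Gaussian perturbation that reduces to chi-type moments---but it diverges from the paper at exactly the step you flag as the ``main obstacle.'' The paper's chain of equalities replaces $\mathbb{E}_{\mathbf{g}}[T_{\mathcal{A}}]=\mathbb{E}_{\mathbf{g}}[1/P_{\mathcal{A}}]$ by $\left(\mathbb{E}_{\mathbf{g}}[P_{\mathcal{A}}]\right)^{-1}$ and then evaluates $\mathbb{E}[\norm{\mathbf{g}}_2]$ from the mean of the chi distribution; since $t\mapsto 1/t$ is convex, that interchange only \emph{lower}-bounds the expected runtime, and the paper also asserts without argument that the minimum of $\mathbb{E}_{\mathbf{g}}[P_{\mathcal{A}}(\mathbf{x}+\mathbf{g})]$ over $\mathbf{x}$ is attained at $\mathbf{x}=\mathbf{0}$. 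You close both gaps: the negative half-moment is computed honestly as $\mathbb{E}[(\chi^2_D(\lambda))^{-1/2}]$, finite for $D\ge 2$ and equal to $2^{-1/2}\,\Gamma((D-1)/2)/\Gamma(D/2)=\Theta(D^{-1/2})$ at $\lambda=0$, and the Poisson-mixture/stochastic-monotonicity argument rigorously identifies $\mathbf{x}=\mathbf{0}$ as the worst case. The two computations agree because for $D\ge 2$ the harmonic and arithmetic means of $\norm{\mathbf{g}}_2$ differ only by a constant factor, so the paper's shortcut happens to produce the right scaling; what your version buys is that the $\mathcal{O}(1/\sigma)$ figure is an actual upper bound on the quantity defined in Eq.~(\ref{smooth}) rather than a heuristic evaluation, at the cost of invoking the noncentral chi-squared machinery. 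Your closing caveat about $|x_i+g_i|$ possibly exceeding the normalisation assumed for the controlled rotation is likewise a real, if minor, point that the paper's proof passes over silently.
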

\begin{proof}
Let $\mathcal{A}$ be the algorithm that maps $D^{-\frac{1}{2}} \sum_i \ket{i}\ket{x_i}$ into $\ket{\mathbf{x}} = \|\mathbf{x}\|_2^{-1}\sum_{i=1}^D x_i \ket{i}$. After applying the controlled rotation and uncomputing the second register,  
the optimal success probability of projecting the state,
\begin{align}
D^{-\frac{1}{2}} \sum_i \ket{i}\left(\sqrt{1-|x_i|^2} \ket{0} + x_i \ket{1}\right)	
\end{align}
onto the desired state $\ket{\mathbf{x}}$ is given by $P_{\mathcal{A}}=\mathcal{O}(\frac{\|\mathbf{x}\|_2}{\sqrt{D}})$ with fixed-point amplitude amplification \cite{gilyen2018quantum}. From the definition of smoothed complexity, Eq.~\ref{smooth}, in the worst case, we have 
\begin{align}
\text{Smoothed}^{\sigma}_{\mathcal{A}}(D)=&\max_{\mathbf{x}\in [-1,1]^D} \mathbb{E}_{\mathbf{g}}[T_\mathcal{A}(\mathbf{x}+\mathbf{g})]	\nonumber\\
=& \left( \min_{\mathbf{x}\in [-1,1]^D} \mathbb{E}_{\mathbf{g}}[P_\mathcal{A}(\mathbf{x}+\mathbf{g})]\right)^{-1} \nonumber\\
=& \left(\mathbb{E}_{\mathbf{g}}[P_\mathcal{A}(\mathbf{g})]\right)^{-1}\nonumber\\
=& D^{\frac{1}{2}} \left(\mathbb{E}[\|\mathbf{g}\|_2]\right)^{-1}.\nonumber\\
=& \mathcal{O}\left(\frac{1}{\sigma}\right),\label{smoothresult}
\end{align} 	
where the last equality followed by noting that the random variable, $\|\mathbf{g}\|_2$, by definition follows a chi distribution with mean,
\begin{align}
\mathbb{E}[\|\mathbf{g}\|_2] = \sqrt{2}\sigma\frac{\Gamma((D+1)/2)}{\Gamma(D/2)}=\mathcal{O}(\sigma\sqrt{D} ).	
\end{align}
\end{proof}
\noindent The result of Eq.~\ref{smoothresult} implies that when the input vector is subjective to a certain level of noise represented by an element-wise Gaussian perturbation, the query complexity of preparing amplitude encoding is independent of the dimensionality.

We have seen that the quantum state preparation based on amplitude encoding has a constant runtime given that the input is subjective to a finite variance Gaussian noise. An analogous reasoning applies in the fully classical setting.
Classically, particularly in the context of quantum-inspired ML algorithm, $\ell_2$-sampling can be achieved by simple rejection sampling: an entry is chosen uniformly at random, and a value $x_j$ is read (we assume $x_j\leq c$, and $c$ is a known constant upper bound on the entries). Then a random real is sampled from the interval $(0 , c)$, and if this value is below $|x_j|^2$, the value $j$ is output.  Otherwise the process is repeated. The acceptance probability of the element $j$ is given by $|x_j|^2$ as desired, which leads to an average runtime of $D(\sum_j|x|_j^2)^{-1}$ for producing an $\ell_2$-sample from the correct distribution. We can make an analogous smoothed analysis for this classical rejection sampling process. Denoting $\mathcal{R}$ as the algorithm that performs rejection sampling, we have 
\begin{align}
\text{Smoothed}^{\sigma}_{\mathcal{R}}(D)=&\max_{\mathbf{x}\in [-1,1]^D} \mathbb{E}_{\mathbf{g}}[T_\mathcal{R}(\mathbf{x}+\mathbf{g})]	\nonumber\\
=& D \left(\mathbb{E}[\|\mathbf{g}\|_2]\right)^{-2}.\nonumber\\
=& \mathcal{O}\left(\frac{1}{\sigma^2}\right).
\end{align} 	
Thus the smoothed complexity of preparing quantum amplitude encoding from QRAM queries and classical $\ell_2$-samplings from classical RAM are $\mathcal{O}(\frac{1}{\sigma})$ and $\mathcal{O}(\frac{1}{\sigma^2})$ respectively given a $\sigma^2$-variance Gaussian perturbation on the input. The quadratic quantum improvement in the dependency on $\sigma$ comes directly from amplitude amplification.

In practical settings, the effect of a noisy element-wise perturbation on input data is well-studied in the machine learning literature \cite{Nettleton2010A, Kalapanidas2003Machine,Cesa2011Online}. As a specific example, in the domain of computer vision, Ref. \cite{dodge2016understanding}
examined the effect of various synthetic noise effects on the performance of popular deep learning
computer vision models such as Caffe, VGG16, VGG-CNN-S, and GoogLeNet.
Two relevant results
from the work, the pixel-wise Gaussian noise and a change in contrast to the image, make a little
effect on the performance. 
While adversarial attacks aim to find the worst
case corruption which leads to misclassification, Ref. \cite{dodge2016understanding} empirically shows that random
perturbations of a half pixel and small shifts in the pixel values will likely have a negligible effect. 
Furthermore, if the entry-wise perturbation represents a systematic shift in the data points, it has no effect on a large class of useful machine learning models. For examples, kernel methods such as Gaussian processes, support vector machines, determinantal point processes and Gaussian Markov random fields, to name but a few, most commonly use stationary kernels which are shift invariant \cite{genton2001classes}.
More generally, any digital data processing with floating-point arithmetic only makes sense if the overall results of such a computation maintain its validity when the features of the input vector had been perturbed below the machine precision. 
This is also practically reasonable, since real world data come from measurements of finite precision.

An element-wise perturbation can be represented by an off-set in the $\infty$-norm induced distance between the original vector $\mathbf{x}$ and the perturbed vector $\mathbf{x}'$. Assuming that the data analysis process, along with the particular instance of the input data, is robust against such small $\infty$-norm perturbations, we can in fact choose to work with the vector entries $x'_i$ which are half-integer multiples of the base precision $\epsilon$. Such that $\mathbf{x'}$ is chosen to be the closest representable vector to $\mathbf{x}$ (as shown in Figure \ref{fig:numbering}), which satisfies $|\mathbf{x'} - \mathbf{x}|_\infty\leq \frac{\epsilon}{2}$ and the distance from the true value of the data is less than $\epsilon$. This offset rounding can be implemented in the oracle access stage, or effectively realised at the controlled rotation stage, as discussed earlier. The aforementioned robustness assumption in data processing implies the analysis of results are insensitive to such offset rounding. In the quantum setting, the key benefit in using the offset rounding is that, since the exact representation of $0$ is not included in this offset rounding convention, the probability of the final projection step succeeding is at least $\frac{\epsilon^2}{4}$, and hence independent of $D$. This leads to an expected number of queries of $\mathcal{O} (1/\epsilon^2)$. The query complexity can be improved to $\mathcal{O}(1/\epsilon)$ using fixed-point quantum amplitude amplification based on the methods presented in Ref. \cite{gilyen2018quantum}, which achieves the same optimal query complexity as in Ref. \cite{yoder2014fixed}, but without introducing an additional phase that could be undesirable if multiple vectors are required to be prepared in superposition.
Note that in some cases, a systematic perturbation of data-vectors, by utilising, say, a positive sign offset ($+\epsilon/2$) to data-points is undesirable. 
In these cases, one can opt for a near-white noise offset using either a suitable pseudo-random number generator seeded by the memory location being queried or by adding random data $w_i$, performing $\ket{i}\ket{x_i} \to \ket{i}\ket{x_i + w_i}$. Critically, however, the number of oracle queries necessary to successfully prepare the state always has an upper bound that is independent from the database size.
\begin{figure}[t]
	\includegraphics[width = \columnwidth]{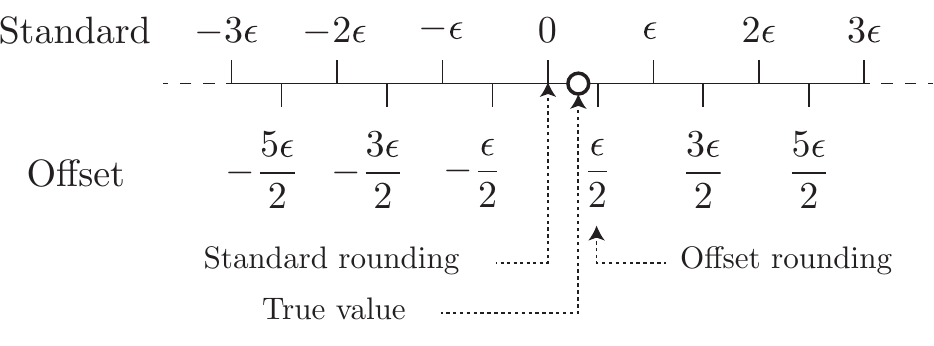}
	\caption{Numerical rounding conventions. In the standard rounding convention scalar values are rounded to the nearest integer multiple of precision $\epsilon$. Alternatively, we can consider an offset rounding convention, where the rounding is to the nearest half-integer multiple of $\epsilon$. As a result, this numbering convention does not contain an exact representation of $0$. In either scheme, the rounded value is always within $\frac{\epsilon}{2}$ of the true value. \label{fig:numbering}}
\end{figure}
It is worth cautioning that there exist scenarios, especially in computational learning theory \cite{arunachalam2017guest}, for which the robustness assumption against entry-wise data perturbation does not generally hold. For instance, if the input vector contains zeros with special meanings such as indicating a canonical vector or representing unknown values, the model may be sensitive toward shifting the zeros to even a small value. For instance, when loading a high-dimensional data point with constant or polylogarithmic sparsity, a systematic shift on the zero entries will produce a state vector converging to a uniform superposition, hence losing necessary information about the original input for meaningful analysis. Nevertheless, the constraint that $|\mathbf{x'} - \mathbf{x}|_\infty\leq \epsilon$, rather than requiring closeness in the 2-norm is still meaningful to a large class of practical machine learning tasks. We should also note that that the quantum state preparation discussed here is useful to quantum algorithms which aim to improve the efficiency of conventional classical algorithms but otherwise realise the same input-output functionality, such that the quantum algorithms inherit the desired robustness property of their conventional counter-parts. 

In summary, we have shown that any application which is robust under small $\infty$-norm perturbations, as it is the case in most practical machine learning, allows for efficient input preparation, both in the sense of classical $\ell_2$-sampling and the coherent amplitude encoded state preparation. In the context of quantum machine learning, this suggests that the caveat related to state preparation raised by Aaronson \cite{Aaronson} can generally be overcome for a wide range of practical use-cases, due to the natural robustness assumption. In the context of quantum-inspired machine learning, this finding removes the necessity of special data structures that involve the storage of partial sums. Hence we have provided a concrete argument for the feasibility of input preparation for both quantum and quantum-inspired algorithms for machine learning under the most general and flexible entry-wise query access model, which we believe will present both conceptual merit and practical utility to the promising exploration between quantum information and machine learning.

\textit{Acknowledgements--}The authors thank Ashley Montanaro and Ronald de Wolf for helpful comments on the manuscript. JFF and ZZ acknowledge support from Singapore's Ministry of Education and National Research Foundation. JFF acknowledges support from the Air Force Office of Scientific Research under AOARD grant FA2386-15-1-4082. This material is based on research funded in part by the Singapore National Research Foundation under NRF Award NRF-NRFF2013-01.

\textit{Competing financial interests--} Dr. Fitzsimons has financial holdings in Horizon Quantum Computing Pte Ltd. Otherwise, the authors have no potential financial or non-financial conflicts of interest.

\bibliographystyle{apsrev}
\bibliography{state_prep}
\end{document}